    \newtheorem{theorem}{Theorem}
    \newtheorem{lemma}[theorem]{Lemma}
    \newtheorem{corollary}[theorem]{Corollary}
    \newtheorem*{fact*}{Fact}
    \newtheorem{observation}{Observation}
    \newtheorem*{observation*}{Observation}
    \newtheorem{claim}{Claim}
    \newtheorem*{claim*}{Claim}
    \theoremstyle{definition}
    \newtheorem*{remark*}{Remark}
\title{Validating a PTAS for Triangle-Free 2-Matching \\
via a Simple Decomposition Theorem%
\thanks{%
This work was partially supported by the joint project of Kyoto University and Toyota Motor Corporation,
titled ``Advanced Mathematical Science for Mobility Society'', and
by JSPS KAKENHI Grant Numbers JP22H05001 and JP24K02901. 
}
}
\author{
Yusuke Kobayashi\thanks{Research Institute for Mathematical Sciences, Kyoto University.
E-mail: \{yusuke, tnoguchi\}@kurims.kyoto-u.ac.jp}
\and Takashi Noguchi\footnotemark[2]
}
\date{}
\begin{document}

\maketitle

\begin{abstract}
A triangle-free (simple) 2-matching is an edge set that has at most $2$ edges incident to each vertex and  contains no cycle of length $3$.
For the problem of finding a maximum cardinality triangle-free 2-matching in a given graph, a complicated exact algorithm was proposed by Hartvigsen.
Recently, a simple PTAS using local search was presented by Bosch-Calvo, Grandoni, and Ameli, but its validity proof is not easy.
In this paper, we show a natural and simple decomposition theorem for triangle-free 2-matchings, which leads to a simpler validity proof of the PTAS for the problem.
\end{abstract}

\section{Introduction}
For an undirected graph $G=(V, E)$, an edge subset $M\subseteq E$ is called a \emph{(simple) 2-matching}\footnote{In this paper, we only deal with simple 2-matchings, and so “simple” is omitted for brevity.} if the number of edges in $M$ incident to each vertex is at most $2$. 
It is known that a 2-matching of maximum cardinality for a given graph can be computed in polynomial time by using a matching algorithm, but the problem becomes difficult when additional constraints are imposed.

We say that a 2-matching $M$ is \emph{triangle-free} if it contains no cycle of length $3$, which is called a \emph{triangle}. 
In this paper, we consider the problem of finding a triangle-free 2-matching of maximum cardinality, which we call {\sc Triangle-Free 2-Matching}.
\medskip

\noindent\underline{{\sc Triangle-Free 2-Matching}}\ \ \ 
Given a simple graph $G = (V, E)$, find a triangle-free 2-matching $M \subseteq E$ of maximum cardinality. 
\medskip

A polynomial-time algorithm for {\sc Triangle-Free 2-Matching} was proposed by Hartvigsen in 1984~\cite{HartD}, and its improved version was recently published~\cite{R-HartD}. 
However, his algorithm and analysis are quite complex. 
Recently, Paluch~\cite{PalK} also reported another polynomial-time algorithm for the problem, which is also complex and difficult to verify. 
{\sc Triangle-Free 2-matching} has an application in approximation algorithms for survivable network design problems as described later, and we can use a PTAS insead of an exact algorithm in this context.
With this motivation, a simple PTAS for {\sc Triangle-Free 2-Matching} was proposed by Bosch-Calvo, Grandoni, and Ameli~\cite{BGA}.
For $\varepsilon >0$, their $(1-\varepsilon)$-approximation algorithm is a local search algorithm described as follows. 
We begin with an initial solution $APX=\emptyset$. 
In each iteration, we find a trail $P$ with $|P|\le 2/\varepsilon$ satisfying that 
\begin{equation}\label{condition:augment}
    APX\bigtriangleup P\ \text{is a triangle-free 2-matching whose size is larger than}\ |APX|, \tag{$\ast$}
\end{equation}
and update $APX$ to $APX\bigtriangleup P$.
Here, $\bigtriangleup$ denotes the symmetric difference operator.
If there does not exist such a trail $P$, we return $APX$ (see also Algorithm~\ref{alg:unweghted}).

\begin{algorithm}
    \caption{$(1-\varepsilon)$-approximation algorithm for {\sc Triangle-Free 2-Matching}}
    \label{alg:unweghted}
    \begin{algorithmic}
    \STATE $APX \leftarrow \emptyset$
    \WHILE{$\exists$ a trail $P$ with $|P|\le 2/\varepsilon$ satisfying the condition \eqref{condition:augment}}
    \STATE $APX \leftarrow APX \bigtriangleup P$
    \ENDWHILE
    \RETURN $APX$
    \end{algorithmic}
\end{algorithm}

The validity of Algorithm~\ref{alg:unweghted} can be proved by showing the existence of a desired trail when the current solution is not a $(1-\varepsilon)$-approximate solution.
To this end, Bosch-Calvo et al.\ showed technical lemmas (\!\!\cite[Lemmas 3 and 4]{BGA}) that roughly state that the symmetric difference of the current solution and an optimal solution contains edge-disjoint trails with several technical conditions.
Their argument is fundamental, but the proofs of these lemmas require a lot of case analysis. 

In this paper, instead of \cite[Lemmas 3 and 4]{BGA}, we prove the following decomposition theorem for triangle-free 2-matchings, which is the main contribution of this paper.

\begin{theorem}[Decomposition for triangle-free 2-matchings]
\label{thm:tri-part}
    Let $G$ be a simple graph, and $A_1$ and $A_2$ be triangle-free 2-matchings in $G$. 
    Then there is a partition $\mathcal{P}$ of $A_1\bigtriangleup A_2$ into alternating trails w.r.t.\ $(A_1, A_2)$ such that $A_i\bigtriangleup P$ is a triangle-free 2-matching for $i=1, 2$ and for any $P\in \mathcal{P}$.
\end{theorem}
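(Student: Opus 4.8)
The plan is to prove Theorem~\ref{thm:tri-part} by working inside a restricted family of alternating-trail decompositions and then refining within it. Color each edge of $A_1\bigtriangleup A_2$ by the index $i$ of the matching $A_i$ that contains it, so that an alternating trail is exactly a trail whose colors alternate; write $d_i(v)$ for the number of color-$i$ edges of $A_1\bigtriangleup A_2$ at $v$ and $c(v)$ for the number of edges of $A_1\cap A_2$ at $v$, so $d_i(v)+c(v)=\deg_{A_i}(v)\le 2$. It is standard that $A_1\bigtriangleup A_2$ admits a partition into alternating trails (closed trails allowed) in which the number of trail-endpoints at each vertex $v$ is exactly $|d_1(v)-d_2(v)|$: fix at each vertex a pairing of as many color-$1$ with color-$2$ edge-ends as possible and follow the induced transition system. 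Call such a partition \emph{reduced}. For any reduced $\mathcal{P}$ the $2$-matching half of the conclusion is automatic: for $P\in\mathcal{P}$ and a vertex $v$ one has $\deg_{A_i\bigtriangleup P}(v)=\deg_{A_i}(v)-p_i+p_{3-i}$, where $p_j$ counts the color-$j$ edges of $P$ at $v$; here $p_{3-i}-p_i\le 1$, with equality only if $v$ is an endpoint of $P$ whose incident $P$-edge has color $3-i$, in which case reducedness forces $d_{3-i}(v)>d_i(v)$ and hence $\deg_{A_i}(v)=d_i(v)+c(v)\le d_{3-i}(v)-1+c(v)\le 1$; so $\deg_{A_i\bigtriangleup P}(v)\le 2$ in every case.

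The content is triangle-freeness. The key observation is that a triangle $T$ in $A_i\bigtriangleup P$ is not contained in $A_i$ (as $A_i$ is triangle-free), hence contains an edge of $P\cap(A_{3-i}\setminus A_i)$ (a \emph{new} edge), and --- since $A_{3-i}$ is also triangle-free --- is not contained in $(P\cap(A_{3-i}\setminus A_i))\cup(A_1\cap A_2)$ either, hence also contains an edge of $(A_i\setminus A_{3-i})\setminus P$. I would then fix, among all reduced partitions, one minimizing the total number of triangles over all pairs $(P,i)$ with $P\in\mathcal{P}$ and $i\in\{1,2\}$, and derive a contradiction assuming some $A_1\bigtriangleup P$ contains a triangle $T=\{xy,yz,zx\}$, where (after relabeling) $xy$ is a new edge. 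If $yz$ is also new, then $zx\in(A_1\setminus A_2)\setminus P$, and the $2$-matching bound forces the local picture $d_2(y)=2$, $c(y)=0$, $d_1(y)\in\{1,2\}$, with every color-$1$ edge at $y$ on $P$ and $P$ either passing through $y$ twice or ending at $y$; one then re-pairs the color-$1$/color-$2$ edge-ends at $y$ (or instead routes $zx$ onto $P$ at $z$), obtaining a reduced partition in which $xy$ and $yz$, or $xy$ and $zx$, no longer lie on a common member, so $T$ disappears. If instead $yz,zx\in A_1\setminus P$ with, say, $zx\in A_1\setminus A_2$, then $P$ must avoid $z$, and one re-pairs the edge-ends at $x$ so as to route $zx$ onto the member carrying $xy$, again killing $T$.

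The crux --- and the step I expect to be genuinely delicate --- is to verify that one of the available re-pairings can always be chosen so that the total number of triangles \emph{strictly} decreases, contradicting minimality. The point is that such an operation only toggles a bounded set of edges incident to, or one step from, a single pivot vertex within each affected member, so any newly created triangle must be anchored at one of those few toggled edges; the local degree data at the pivot (notably $d_2(y)=2$ and $c(y)=0$ in the main case) then severely restricts which triangles on those edges can exist, which should let one injectively charge the newly created triangles to distinct destroyed ones other than $T$, so that the count drops by at least one. All of this stays within a bounded neighborhood of one pivot vertex at a time, and it is the only place real case analysis is needed; the remaining steps are the short mechanical verifications sketched above, which is what should make the argument markedly simpler than the global analysis behind \cite[Lemmas 3 and 4]{BGA}.
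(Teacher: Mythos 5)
There is a genuine gap, and it sits exactly where you flag it yourself: the claim that among all reduced partitions one minimizing the total triangle count has none, because any offending triangle admits a re-pairing that strictly decreases the count, is never proved --- it is only asserted that an injective charging ``should'' exist. That step is not a routine verification; it is the entire content of the theorem (it plays the role of the case analysis in \cite[Lemmas 3 and 4]{BGA} that one is trying to avoid), and your sketch gives no invariant or charging rule that could be checked. Worse, the locality principle it rests on is doubtful. A re-pairing of color-$1$/color-$2$ edge-ends at a pivot vertex can merge two members of $\mathcal{P}$ into one (or open/close a trail). After a merge of $P$ and $Q$ into $R$, the set $A_i\bigtriangleup R$ differs from both $A_i\bigtriangleup P$ and $A_i\bigtriangleup Q$ along the \emph{whole} of $R$: a new triangle can be formed by one $A_{3-i}$-edge coming from the old $P$ and one coming from the old $Q$ that happen to share a vertex, together with an untouched $A_i$-edge, and this can occur arbitrarily far from the pivot. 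So ``any newly created triangle must be anchored at one of those few toggled edges'' is false in general, and with it the proposed bounded-neighborhood charging collapses unless you add (and prove) further structure on which re-pairings are allowed. A smaller slip: in the degree argument, $p_{3-i}-p_i$ can equal $2$ when both endpoints of $P$ sit at $v$ with color-$(3-i)$ end-edges; the conclusion $\deg_{A_i\bigtriangleup P}(v)\le 2$ still holds because then $d_i(v)=c(v)=0$, but the statement as written needs this extra case.

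For comparison, the paper avoids any such global extremal/charging argument: it generalizes to $\mathcal{T}$-free $2$-matchings for an arbitrary list $\mathcal{T}$ of forbidden triangles and inducts on $|\mathcal{T}|$, in each step removing a triangle from the list after a local surgery on the graph and on $A_1$ (contracting an edge, adding a self-loop or a new edge), then lifting the decomposition back and checking $\mathcal{T}$-freeness via the simple observation that a $2$-matching meeting a triangle in exactly two edges contains no triangle sharing an edge with it. If you want to salvage your route, you would need to restrict the re-pairings so that members never merge (or control merges explicitly) and exhibit a concrete potential function that provably decreases; as it stands the argument is an outline of a strategy rather than a proof.
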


\noindent Here, an \emph{alternating trail w.r.t.\ $(A_1, A_2)$} is a trail traversing edges in $A_1\setminus A_2$ and $A_2\setminus A_1$ alternately.
In the same way as the argument in \cite{BGA}, we can prove the validity of Algorithm~\ref{alg:unweghted} by using Theorem~\ref{thm:tri-part} as follows.

\begin{corollary}
\label{cor:PTAS}
     Given a constant $\varepsilon>0$, Algorithm~\ref{alg:unweghted} computes a $(1-\varepsilon)$-approximate solution for {\sc Triangle-Free 2-Matching} in polynomial time.
\end{corollary}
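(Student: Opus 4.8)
The plan is to reuse the reduction of~\cite{BGA} essentially verbatim, substituting Theorem~\ref{thm:tri-part} for~\cite[Lemmas~3 and~4]{BGA}. Two things must be established: Algorithm~\ref{alg:unweghted} performs polynomially many iterations, and the set $APX$ it returns satisfies $|APX|\ge(1-\varepsilon)|OPT|$, where $OPT$ denotes a triangle-free 2-matching of maximum cardinality. The first is straightforward: each iteration replaces $APX$ by a triangle-free 2-matching of strictly larger cardinality, and every 2-matching has at most $|V|$ edges (the sum of degrees is at most $2|V|$), so there are at most $|V|$ iterations; inside one iteration it suffices to enumerate every trail with at most $2/\varepsilon$ edges --- there are only $n^{O(1/\varepsilon)}$ of them, a polynomial number since $\varepsilon$ is a fixed constant --- and to test~\eqref{condition:augment} for each, which amounts to checking bounded degrees, absence of a triangle, and an increase in cardinality, all doable in polynomial time.

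For the approximation ratio I would argue by contradiction: suppose the algorithm halts with $|APX|<(1-\varepsilon)|OPT|$. Apply Theorem~\ref{thm:tri-part} with $A_1=APX$ and $A_2=OPT$ to obtain a partition $\mathcal{P}$ of $APX\bigtriangleup OPT$ into alternating trails w.r.t.\ $(APX,OPT)$ such that $APX\bigtriangleup P$ is a triangle-free 2-matching for every $P\in\mathcal{P}$. For $P\in\mathcal{P}$ set $g(P)=|P\cap(OPT\setminus APX)|-|P\cap(APX\setminus OPT)|$; since $P\subseteq APX\bigtriangleup OPT$ we have $|APX\bigtriangleup P|=|APX|+g(P)$, and since the edges of $P$ alternate between the two sets, $g(P)\in\{-1,0,1\}$. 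Because $\mathcal{P}$ is a partition of $APX\bigtriangleup OPT$, $\sum_{P\in\mathcal{P}}g(P)=|OPT\setminus APX|-|APX\setminus OPT|=|OPT|-|APX|>\varepsilon|OPT|$, so at least $|OPT|-|APX|>\varepsilon|OPT|$ trails of $\mathcal{P}$ satisfy $g(P)=1$. These trails are pairwise edge-disjoint and contained in $APX\bigtriangleup OPT$, which has at most $|APX|+|OPT|\le 2|OPT|$ edges; hence they cannot all have more than $2/\varepsilon$ edges, and there is $P^{\ast}\in\mathcal{P}$ with $g(P^{\ast})=1$ and $|P^{\ast}|\le 2/\varepsilon$. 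Then $APX\bigtriangleup P^{\ast}$ is a triangle-free 2-matching of cardinality $|APX|+1$, so $P^{\ast}$ is a trail satisfying~\eqref{condition:augment} with $|P^{\ast}|\le 2/\varepsilon$, contradicting the fact that the algorithm has terminated. Therefore $|APX|\ge(1-\varepsilon)|OPT|$.

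Given Theorem~\ref{thm:tri-part}, I do not anticipate any genuine obstacle: this is the same bookkeeping as in~\cite{BGA}. The only points that need a little care are the identity $\sum_{P\in\mathcal{P}}g(P)=|OPT|-|APX|$ combined with $g(P)\le 1$ (this is exactly where the alternating structure of the trails enters), the averaging step that extracts a \emph{short} positive-gain trail from the partition, and the observation that, for each fixed $\varepsilon$, enumerating all trails with at most $2/\varepsilon$ edges and testing~\eqref{condition:augment} runs in polynomial time.
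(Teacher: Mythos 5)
Your proposal is correct and follows essentially the same route as the paper: apply Theorem~\ref{thm:tri-part} to $APX$ and $OPT$, note that at least $|OPT|-|APX|$ trails of the partition have gain $+1$, and use an averaging/counting argument over the at most $2|OPT|$ edges of $APX\bigtriangleup OPT$ to extract an augmenting trail of length at most $2/\varepsilon$, contradicting termination. The only difference is that you also spell out the polynomial bound on the number of iterations and on enumerating trails of length at most $2/\varepsilon$, which the paper leaves implicit.
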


\begin{proof}
    Let $OPT$ be one of the optimal solutions for {\sc Triangle-Free 2-Matching}. 
    It suffices to show that for any triangle-free 2-matching $APX$ with $|APX|<(1-\varepsilon)|OPT|$, there exists a trail $P$ with $|P|\le2/\varepsilon$ satisfying the condition \eqref{condition:augment} in the algorithm. 
    By applying Theorem~\ref{thm:tri-part} with $A_1=OPT$ and $A_2=APX$, we see that there exists a partition $\mathcal{P}$ of $OPT\bigtriangleup APX$ into alternating trails w.r.t.\ $(OPT, APX)$ such that $APX\bigtriangleup P$ is a triangle-free 2-matching for any $P\in \mathcal{P}$. 
    Since $|APX\bigtriangleup P|-|APX|\in\{0, \pm 1\}$ holds for any $P\in \mathcal{P}$, there exist at least\footnote{In fact, the number of such trails is exactly $|OPT|-|APX|$. This is because $|APX\bigtriangleup P|-|APX| = |OPT| - |OPT\bigtriangleup P| \ge 0$ for each $P \in \mathcal{P}$, where the inequality is by the fact that $OPT \bigtriangleup P$ is a triangle-free 2-matching.} $|OPT|-|APX|$ edge-disjoint alternating trails $P$ w.r.t.\ $(APX, OPT)$ in $APX\bigtriangleup OPT$ satisfying $|APX\bigtriangleup P|-|APX|=1$, which satisfy the condition \eqref{condition:augment}. 
    Therefore, when $|APX|<(1-\varepsilon)|OPT|$, by an averaging argument, the length of the shortest trail satisfying the condition \eqref{condition:augment} is at most
    $$\frac{|OPT\bigtriangleup APX|}{|OPT|-|APX|}\le \frac{2|OPT|}{\varepsilon |OPT|}=\frac{2}{\varepsilon}.$$
    This completes the proof.
\end{proof}

Note that the roles of $A_1$ and $A_2$ are symmetric in Theorem~\ref{thm:tri-part}, while the current solution and an optimal solution do not play symmetric roles in \cite[Lemmas 3 and 4]{BGA}.
In this sense, we can say that Theorem~\ref{thm:tri-part} is a natural and simple decomposition theorem.
In our proof of Theorem~\ref{thm:tri-part}, we generalize the triangle-free constraint by introducing a list $\mathcal{T}$ of triangles that must not be included in 2-matchings. 
Then we prove the decomposition theorem inductively by transforming the graph and reducing the size of $\mathcal{T}$ (see Section~\ref{sec:induction} for detail).

\paragraph{Related Study}
A generalization of {\sc Triangle-Free 2-Matching} is the $C_{\le k}$-free 2-matching problem: given a graph, find a maximum cardinality 2-matching not containing a cycle of length at most $k$ (referred to as $C_{\le k}$-free 2-matching). 
The case of $k=3$ is {\sc Triangle-Free 2-Matching}, and the case of $k\ge |V|/2$ includes the Hamilton cycle problem.
Since {\sc Triangle-Free 2-Matching} is a relaxation of the Hamilton cycle problem, 
it has been used to design an approximation algorithm for a special case of the traveling salesman problem~\cite{AMP}.
While it is shown by Papadimitriou (described by Cornuéjols and Pulleyblank~\cite{CP}) that the case of $k\ge 5$ is NP-hard, the complexity for the case of $k=4$ is open.

One can consider the weighted version of the $C_{\le k}$-free 2-matching problem: given a graph and a non-negative edge weight, find a $C_{\le k}$-free 2-matching with maximum total weight. 
The case of $k\ge 5$ is obviously NP-hard as with the unweighted version, and the case of $k=4$ is also shown to be NP-hard even if the input is restricted to bipartite graphs~\cite{Kira}. 
The complexity for the case of $k=3$, which is called the weighted triangle-free 2-matching problem is still open.
Hartvigsen and Li~\cite{HL} showed that the weighted triangle-free 2-matching problem can be solved in polynomial time when the input graph is restricted to subcubic graphs, i.e., at most $3$ edges are incident to each vertex.
Simpler polynomial-time algorithms for the problem on subcubic graphs were proposed by Paluch and Wasylkiewicz~\cite{PW} and by Kobayashi~\cite{K2010}.
Futhermore, Kobayashi~\cite{K2022} proposed a polynomial-time algorithm for the problem when all triangles in the input graph are edge-disjoint.

Recently, it was turned out that {\sc triangle-free 2-matching} can be used as a subroutine of an approximation algorithm for the 2-edge-connected spanning subgraph problem (2-ECSS).
In 2-ECSS, given an undirected graph $G=(V, E)$, we find a 2-edge-connected spanning subgraph $(V, F)$ in $G$ with the minimum cardinality of $F$. 
Here, we say that $(V, F)$ is 2-edge-connected if $(V, F\setminus \{e\})$ is connected for any $e\in F$. 
2-ECSS is one of the most fundamental problems in survivable network design: designing a network that is robust against link or vertex failures. 
For 2-ECSS, there is no PTAS unless P=NP~\cite{CL, CristG}, and there have been many studies on approximation algorithms~\cite{KV, CSS, VV, HVV, SV}. 
An approximation algorithm with the factor of $118/89 + \varepsilon$~$(\simeq 1.326)$ was presented by Garg, Grandoni, and Ameli~\cite{GGA}.
Recently Kobayashi and Noguchi~\cite{KN} showed that if a PTAS for {\sc Triangle-Free 2-Matching} exists, then there exists a $(1.3+\varepsilon)$-approximation algorithm for 2-ECSS. 
They introduced the problem of finding a triangle-free 2-edge-cover of minimum cardinality, which has the same complexity as {\sc Triangle-Free 2-Matching}, and  used its solution as a lower bound for 2-ECSS. 
A similar approach has a potential to be used also for a vertex-connectivity variant of 2-ECSS.

\section{Preliminary} 
Let $G=(V,E)$ be a graph that may have self-loops but has no parallel edges. 
For an edge set $F \subseteq E$ and for a vertex $v \in V$, let $d_{F}(v)$ be the degree of $v$ in $(V, F)$, which is the number of the edges in $F$ incident to $v$, where self-loops are counted twice.
A \emph{trail} $P$ (of length $k$) is defined as a sequence of distinct consecutive edges $u_1 u_2, u_2 u_3, \dots , u_k u_{k+1}$. 
Note that $u_i$ and $u_j$ are not necessarily distinct for any $i, j$. 
We sometimes identify a trail $P$ with the corresponding set of edges (ignoring their order). 
A \emph{triangle} is an edge set that forms a cycle of length $3$.
For 2-matchings $A_1$ and $A_2$ in $G$, we say that a trail $P$ is \emph{alternating w.r.t.\ $(A_1, A_2)$} if $P$ traverses edges in $A_1\setminus A_2$ and $A_2\setminus A_1$ alternately.

Before going into the proof of Theorem~\ref{thm:tri-part}, we now consider the case without the triangle-free constraint.
Although the following lemma is a well-known result, we provide a proof for completeness.

\begin{lemma}
\label{lem:base}
    Let $A_1$ and $A_2$ be 2-matchings in $G$. 
    Then there is a partition $\mathcal{P}$ of $A_1\bigtriangleup A_2$ into alternating trails w.r.t.\ $(A_1, A_2)$ such that $A_i\bigtriangleup P$ is a 2-matching for $i=1, 2$ and for any $P\in \mathcal{P}$. 
\end{lemma}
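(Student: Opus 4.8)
The plan is to analyze the connected components of the subgraph $H = (V, A_1 \bigtriangleup A_2)$. Since $A_1$ and $A_2$ are both 2-matchings, every vertex $v$ satisfies $d_{A_1}(v) \le 2$ and $d_{A_2}(v) \le 2$, hence $d_{A_1 \bigtriangleup A_2}(v) \le 4$. Moreover, at a vertex $v$ with $d_H(v) = 4$, all four incident edges are "used up" in both $A_1$ and $A_2$ (two in $A_1 \setminus A_2$ and two in $A_2 \setminus A_1$), while at a vertex with $d_H(v) = 3$ we have (say) two edges of $A_1 \setminus A_2$ and one of $A_2 \setminus A_1$, or vice versa. The key structural observation is that $H$ has maximum degree $4$, and at each vertex the $(A_1 \setminus A_2)$-edges and $(A_2 \setminus A_1)$-edges are "balanced" up to a discrepancy of one: $|d_{A_1 \setminus A_2}(v) - d_{A_2 \setminus A_1}(v)| \le 1$ for all $v$ (because $d_{A_i}(v) \le 2$ forces $d_{A_1 \setminus A_2}(v) \le 2$ and similarly, and if one were $2$ and the other $0$ the vertex would have degree $\ge 2$ in $A_1$ alone from non-shared edges, which is fine — actually the bound follows since $d_{A_i \setminus A_{3-i}}(v) + d_{A_1 \cap A_2}(v) = d_{A_i}(v) \le 2$, so both $d_{A_i \setminus A_{3-i}}(v) \le 2$, and a short check rules out the $(2,0)$ case only when... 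I would verify this carefully, but the essential point is degree boundedness).

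**Next I would** decompose $H$ into alternating trails. Consider the multigraph obtained by "color-pairing" at each vertex: at a vertex with both an $A_1\setminus A_2$-edge and an $A_2\setminus A_1$-edge incident, we may pass through while keeping alternation. The standard trick is to build an auxiliary Eulerian-type structure: for each vertex, pair up the incident edges so that an $A_1\setminus A_2$-edge is matched with an $A_2\setminus A_1$-edge whenever possible; the at-most-one leftover edge of the majority color becomes an endpoint of a trail. This pairing induces a partition of the edge set of $H$ into alternating trails and alternating closed trails (cycles of even length in the alternation, possibly degenerate). I would then argue that each resulting trail $P$ is alternating w.r.t.\ $(A_1, A_2)$ by construction.

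**Then I would** verify the degree condition that makes $A_i \bigtriangleup P$ a 2-matching. Fix $i$ and a trail $P \in \mathcal{P}$; I want $d_{A_i \bigtriangleup P}(v) \le 2$ for all $v$. Since $P$ is alternating, at any internal vertex of $P$ the trail uses one edge of $A_i \setminus A_{3-i}$ and one edge of $A_{3-i}\setminus A_i$ (or passes through twice with two such pairs), so $d_{A_i \cap P}(v) = d_{(A_{3-i}\setminus A_i) \cap P}(v)$ at internal vertices, and taking the symmetric difference with $P$ removes as many $A_i$-edges at $v$ as it adds non-$A_i$-edges; hence $d_{A_i \bigtriangleup P}(v) = d_{A_i}(v) \le 2$ at internal vertices. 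At an endpoint $v$ of $P$, the trail contributes one extra edge, which (by the pairing construction — the leftover edge is of the majority color, and the majority color at $v$, if it exists, must be $A_{3-i}\setminus A_i$ in the case relevant to increasing, or one checks both cases) either is in $A_i$ (so $d_{A_i\bigtriangleup P}(v) = d_{A_i}(v) - 1 \le 1$) or is not in $A_i$, in which case $d_{A_i}(v) \le 1$ at that endpoint — precisely because $v$ being an endpoint with an unpaired $A_{3-i}\setminus A_i$-edge means $d_{A_{3-i}\setminus A_i}(v) > d_{A_i \setminus A_{3-i}}(v)$, forcing $d_{A_i \setminus A_{3-i}}(v) \le 0$ or at most making $d_{A_i}(v)$ small enough; I would pin down this case check, which is elementary given $d_{A_i}(v)\le 2$.

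**The main obstacle** I anticipate is getting the endpoint bookkeeping exactly right: one must choose the pairing at each vertex so that, simultaneously for $i=1$ and $i=2$, the unpaired (leftover) edge at any trail-endpoint does not push $d_{A_i \bigtriangleup P}(v)$ above $2$. Because the roles of $A_1$ and $A_2$ are symmetric and the leftover edge is always of the locally-majority color, the only dangerous configuration is a vertex where some color has degree $2$ and is the majority — but then that vertex is an endpoint for at most one trail and the gained edge is of that majority color, so it belongs to $A_j$ for the $j$ making it the majority, and $d_{A_j \bigtriangleup P}(v) = d_{A_j}(v) - 1 \le 1$, while for the other index the gained edge is the minority color with local degree $\le 1$, keeping the bound. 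Making this airtight — including handling self-loops, which count twice in degrees and in the alternation must be treated as contributing a matched pair to the relevant color — is the delicate part, but it is all bounded-case checking with no deep idea beyond the "balanced degrees, pair greedily" principle.
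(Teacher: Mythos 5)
Your overall strategy (decompose $A_1\bigtriangleup A_2$ explicitly by pairing, at each vertex, an $(A_1\setminus A_2)$-edge with an $(A_2\setminus A_1)$-edge whenever possible, so that trails alternate by construction and end only at unpaired edges) is a legitimate alternative to the paper's argument, which instead takes an \emph{extremal} partition into alternating trails (one with fewest trails) and uses a concatenation/minimality argument to control endpoint degrees. However, as written your proof has a concrete false claim at its core: the ``key structural observation'' that $|d_{A_1\setminus A_2}(v)-d_{A_2\setminus A_1}(v)|\le 1$ is wrong. A vertex $v$ with two edges of $A_1$ and no edge of $A_2$ incident to it gives discrepancy $2$; nothing rules this out. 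Consequently there can be \emph{two} leftover (unpaired) edges at a vertex, and such a vertex can be an endpoint of two trails, or carry both ends of one trail --- contradicting your later assertion that ``that vertex is an endpoint for at most one trail,'' which is exactly the step your endpoint analysis leans on. Together with the explicitly deferred case checks (``I would verify this carefully,'' ``I would pin down this case check''), the decisive part of the proof --- that $d_{A_i\bigtriangleup P}(v)\le 2$ at trail endpoints --- is not actually established.

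The gap is fixable within your framework, and the correct statement is quantitatively stronger than what you wrote: if $k\ge 1$ unpaired edges at $v$ all lie in $A_1\setminus A_2$ (the majority colour), then $d_{A_2}(v)=d_{A_1}(v)-\bigl(d_{A_1\setminus A_2}(v)-d_{A_2\setminus A_1}(v)\bigr)\le 2-k$. Since a trail $P$ can place at most $k$ of its ends at $v$, and each such end adds one edge to $A_2$ at $v$ while interior passes of $P$ through $v$ are degree-neutral, one gets $d_{A_2\bigtriangleup P}(v)\le (2-k)+k=2$; for $A_1$ the end edges are only removed, so $d_{A_1\bigtriangleup P}(v)\le d_{A_1}(v)\le 2$. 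With this inequality replacing your ``discrepancy at most one / at most one endpoint'' claims (and treating a self-loop of $A_1\bigtriangleup A_2$ at $v$ as a pair of edge-ends in the pairing), your construction goes through. By contrast, the paper avoids building the pairing altogether: it picks a fewest-trails alternating partition and shows that if some trail had an endpoint violating the degree bound in the ``wrong'' direction, two trails could be concatenated, contradicting minimality; the endpoint bound then follows from a short counting identity. Both routes are elementary; the paper's is shorter because the extremal choice does the bookkeeping for you.
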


\begin{proof}
    Let $\mathcal{P}$ be a partition of $A_1\bigtriangleup A_2$ into alternating trails w.r.t.\ $(A_1, A_2)$ which has fewest trails among such partitions. 
    Note that such a partition exists because every single edge in $A_1\bigtriangleup A_2$ is an alternating trail. 
    To prove that $\mathcal{P}$ satisfies the condition, we show that $d_{A_i\bigtriangleup P}(v)\le 2$ holds for $i=1,2$, for any $P\in \mathcal{P}$, and for any vertex $v$.
    If $d_{A_1\cap P}(v)=d_{A_2\cap P}(v)$, then $d_{A_i\bigtriangleup P}(v)=d_{A_i}(v)\le 2$.
    
    Now consider the case when $d_{A_1\cap P}(v)>d_{A_2\cap P}(v)$, which implies that $v$ is one of the endpoints of $P$ and the first (or last) edge of $P$ incident to $v$ is in $A_1\setminus A_2$. 
    In this case, $d_{A_1\cap P'}(v)\ge d_{A_2\cap P'}(v)$ holds for any $P'\in \mathcal{P}\setminus \{P\}$. 
    This is because if $d_{A_1\cap P'}(v)>d_{A_2\cap P'}(v)$ holds for some $P'\in \mathcal{P}\setminus \{P\}$, then $v$ is one of the endpoints of $P'$ and the first (or last) edge of $P'$ incident to $v$ is in $A_2\setminus A_1$, which shows that concatenating $P$ and $P'$ leads to a partition with fewer trails than $\mathcal{P}$, contradicting the choice of $\mathcal{P}$. 
    This shows that
    \begin{eqnarray}
        d_{A_2\bigtriangleup P}(v) &=& d_{A_2\setminus P}(v) + d_{A_1\cap P}(v) \nonumber\\
        &=& d_{A_1\cap A_2}(v) + \sum_{P'\in \mathcal{P}\setminus \{P\}}d_{A_2\cap P'}(v) + d_{A_1\cap P}(v) \nonumber\\
        &\le& d_{A_1\cap A_2}(v) + \sum_{P'\in \mathcal{P}\setminus \{P\}}d_{A_1\cap P'}(v) + d_{A_1\cap P}(v) =d_{A_1}(v)\le 2. \nonumber
    \end{eqnarray}
    Since $d_{A_1\cap P}(v)>d_{A_2\cap P}(v)$ holds, we also see that
    
    $$d_{A_1\bigtriangleup P}(v)=d_{A_1\setminus P}(v)+d_{A_2\cap P}(v)<d_{A_1\setminus P}(v)+d_{A_1\cap P}(v)=d_{A_1}(v)\le 2.$$
    
    The same argument works for the case when $d_{A_1\cap P}(v)<d_{A_2\cap P}(v)$ by changing the roles of $A_1$ and $A_2$.

    Therefore, $A_i\bigtriangleup P$ is a 2-matching for $i=1, 2$ and for any $P\in \mathcal{P}$.
\end{proof}

Let $\mathcal{T}$ be a subset of triangles in $G=(V, E)$. 
An edge set $F\subseteq E$ is called $\mathcal{T}$-\emph{free} if $F$ does not contain any triangle $T\in\mathcal{T}$ as a subset.
Note that when $\mathcal{T}$ is the set of all triangles in $G$, $\mathcal{T}$-freeness is equivalent to triangle-freeness.

For an edge set $F\subseteq E$, we define $\mathcal{T}(F)\subseteq \mathcal{T}$ as the set of all triangles in $\mathcal{T}$ containing at least one edge in $F$. 
Since there are no parallel edges, we observe the following, which will be used in our argument. 

\begin{observation}
\label{obs:T(T)}
    Let $\mathcal{T}$ be a subset of triangles, $M$ be a 2-matching, and $T$ be a triangle (possibly, $T\notin \mathcal{T}$).
    If $|M\cap T|=2$ holds then $M$ is $\mathcal{T}(T)$-free.
\end{observation}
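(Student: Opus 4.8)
The plan is to argue by contradiction, exploiting the degree bound of the 2-matching $M$ at the vertex where the two edges of $M\cap T$ meet. To set up, note that since $T$ is a triangle and $|M\cap T|=2$, I can write $M\cap T=\{f_1,f_2\}$, and any two distinct edges of a triangle share exactly one endpoint, so let $v$ be the common endpoint of $f_1$ and $f_2$. Since $f_1,f_2$ are two distinct edges of $M$ incident to $v$ and $d_M(v)\le 2$, these are the \emph{only} edges of $M$ incident to $v$. This observation is the whole engine of the proof.

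Next I would suppose, towards a contradiction, that $M$ contains some $T'\in\mathcal{T}(T)$, i.e.\ $T'\subseteq M$. By definition of $\mathcal{T}(T)$, the triangle $T'$ shares at least one edge $e$ with $T$; since $T'\subseteq M$, this edge lies in $M\cap T=\{f_1,f_2\}$ and hence is incident to $v$. Therefore $v$ is a vertex of the triangle $T'$, and the two edges of $T'$ incident to $v$ are edges of $M$ at $v$ — which, by the first paragraph, forces them to be exactly $f_1$ and $f_2$. Thus $\{f_1,f_2\}\subseteq T'$.

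Finally, $f_1$ and $f_2$ are two edges of the triangle $T$ meeting at $v$; since $G$ has no parallel edges, the unique triangle containing both $f_1$ and $f_2$ is $T$ itself, so $T'=T$. But then $T=T'\subseteq M$ gives $|M\cap T|=3$, contradicting $|M\cap T|=2$. Hence $M$ contains no triangle of $\mathcal{T}(T)$, i.e.\ $M$ is $\mathcal{T}(T)$-free. The only point needing a little care is the last step, where the ``no parallel edges'' hypothesis is used to recover $T'=T$ from the fact that $T'$ contains the two edges $f_1,f_2$; otherwise the argument is a direct application of the degree constraint, and I do not expect any real obstacle.
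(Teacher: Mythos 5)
Your proof is correct: the degree bound at the common vertex $v$ of the two edges in $M\cap T$, together with the no-parallel-edges assumption, forces any triangle of $\mathcal{T}(T)$ contained in $M$ to coincide with $T$, contradicting $|M\cap T|=2$. The paper states this as an observation without a written proof (noting only that it relies on the absence of parallel edges), and your argument is exactly the intended one, so there is nothing to compare beyond saying it fills in that gap correctly.
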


\section{Proof of Theorem~\ref{thm:tri-part}}
\label{sec:induction}
In this section, we prove the following theorem, which is stronger than Theorem~\ref{thm:tri-part}.

\begin{theorem}[Decomposition for $\mathcal{T}$-free 2-matchings]
\label{thm:T-part}
    Let $G$ be a graph that may have self-loops but has no parallel edges, $\mathcal{T}$ be a subset of triangles in $G$, and $A_1$ and $A_2$ be $\mathcal{T}$-free 2-matchings in $G$. 
    Then there exists a partition $\mathcal{P}$ of $A_1\bigtriangleup A_2$ into alternating trails w.r.t.\ $(A_1, A_2)$ such that $A_i\bigtriangleup P$ is a $\mathcal{T}$-free 2-matching for $i=1, 2$ and for any $P\in \mathcal{P}$.
\end{theorem}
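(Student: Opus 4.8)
The plan is to prove Theorem~\ref{thm:T-part} by induction on $|\mathcal{T}|$. The base case $\mathcal{T}=\emptyset$ is precisely Lemma~\ref{lem:base}, since then $\mathcal{T}$-freeness is vacuous. For the inductive step, fix a triangle $T=\{a,b,c\}\in\mathcal{T}$. If $T\not\subseteq A_1\cup A_2$, apply the induction hypothesis to $(G,\mathcal{T}\setminus\{T\},A_1,A_2)$: the resulting partition $\mathcal{P}$ already works for $\mathcal{T}$, because every $P\in\mathcal{P}$ satisfies $A_i\bigtriangleup P\subseteq A_1\cup A_2$, which does not contain $T$, so $A_i\bigtriangleup P$ is $\mathcal{T}$-free. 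So assume $T\subseteq A_1\cup A_2$; since $A_1,A_2$ are $\mathcal{T}$-free we also have $T\not\subseteq A_1$ and $T\not\subseteq A_2$, and hence, up to swapping $A_1,A_2$, the edges of $T$ fall into exactly one of two patterns: (i) one edge, say $c$, lies in $A_1\cap A_2$, while $a\in A_2\setminus A_1$ and $b\in A_1\setminus A_2$; or (ii) $a\in A_2\setminus A_1$ and $b,c\in A_1\setminus A_2$. A useful structural consequence is that $\mathcal{T}$-freeness forces a vertex of $T$ to be $A_i$-saturated (degree exactly $2$): in pattern (i), $A_1$ is saturated at the common endpoint $z$ of $b,c$ and $A_2$ is saturated at the common endpoint $x$ of $a,c$; in pattern (ii), $A_1$ is saturated at the common endpoint $z$ of $b,c$.

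The heart of the proof is to produce a strictly smaller instance by contracting part of $T$ and then to lift back the resulting trail partition. In pattern (i) I would contract the edge $c$. Since $c\in A_1\cap A_2$, the merged vertex $w$ satisfies $d_{A_i'}(w)\le d_{A_i}(x)+d_{A_i}(z)-2\le 2$, so $A_1',A_2'$ remain $2$-matchings; the contraction may create parallel edges (in particular $a$ and $b$ become parallel), which are eliminated by identifying each parallel pair into a single edge, consistently with membership in $A_1',A_2'$ — each such pair consists of the two edges other than $c$ of a triangle through $c$. Deleting from $\mathcal{T}$ the triangles destroyed by these operations (at least $T$ itself) leaves a set $\mathcal{T}'$ with $|\mathcal{T}'|<|\mathcal{T}|$, so the induction hypothesis applied to $(G',\mathcal{T}',A_1',A_2')$ yields a partition $\mathcal{P}'$. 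For the lift: trails avoiding $w$ lift verbatim; a trail through $w$ is re-routed through the corresponding edges incident to $x$ or $z$ in $G$, and the symmetric-difference edges that were merged away (such as $a$ and $b$) are spliced back as short alternating subtrails — e.g.\ the path $a,b$ bridges $x$ and $z$ through $y$ and fits exactly between the two edges that passed through $w$ in $G'$; if no trail passed through $w$, those edges form a separate (possibly closed) alternating trail. Using the saturation one checks that each lifted trail is alternating w.r.t.\ $(A_1,A_2)$, that $A_i\bigtriangleup P$ is a $2$-matching, and that $T$ (and every other triangle of $\mathcal{T}$) is not recreated — the key point being that each re-inserted subtrail contains, for each $i$, an edge of $T$ that lies in $A_i$ and is therefore absent from $A_i\bigtriangleup P$. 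Pattern (ii) is handled in the same spirit: a single edge of $T$ is no longer obviously safe to contract, since the side containing only one edge of $T$ need not be saturated, so one contracts the two-edge $A_1$-subpath $b,c$ (safe for $A_1$ by the saturation at $z$) while re-assigning or re-routing the few symmetric-difference edges incident to $x,y,z$.

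The main obstacle is exactly this lift-back bookkeeping: ruling out and resolving all parallel edges the contraction produces, guaranteeing that re-expanding the contracted vertex does not push any un-merged vertex above degree $2$, and verifying that no lifted trail reconstructs $T$ or any other triangle of $\mathcal{T}$. Pattern (ii) is where I expect the most care to be needed, since there the naive single-edge contraction can blow up a degree on the side where saturation is unavailable, and the re-routing of the incident symmetric-difference edges must be chosen so that both the degree bound and $\mathcal{T}$-freeness survive the lift.
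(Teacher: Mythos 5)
Your overall strategy coincides with the paper's: induction on $|\mathcal{T}|$, base case from Lemma~\ref{lem:base}, discarding a triangle $T\not\subseteq A_1\cup A_2$, the classification of $T\subseteq A_1\cup A_2$ into the two patterns with the saturation observations, and, for pattern (i), contracting the $A_1\cap A_2$ edge, identifying the resulting parallel pairs, and splicing the two merged-away edges back in (either as a closed $4$-edge trail or into the unique trail through the merged vertex) is essentially the paper's Case 2 with its two subcases. The genuine gap is pattern (ii), which is precisely where the paper's main ideas are needed, and your proposal does not contain a workable step there. Contracting the two-edge $A_1$-path $b,c$ merges all three vertices of $T$, so the $A_2$-edge $a$ becomes a self-loop at the merged vertex $w$ contributing $2$ to $d_{A_2'}(w)$; since $A_2$ may in addition have one more edge at each of $x$ and $y$ and two at $z$, one can have $d_{A_2'}(w)$ as large as $6$, so $A_2'$ is not a $2$-matching, and identifying parallel edges does not repair this. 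The ``re-assigning or re-routing'' you defer to is not bookkeeping: it is the missing core of the induction step, and you explicitly flag that you do not know how to carry it out while preserving both the degree bound and $\mathcal{T}$-freeness.

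For comparison, the paper handles pattern (ii) without any contraction: it keeps the vertex set and performs surgery on $A_1$, replacing the two $A_1$-edges $u_1u_2,u_3u_1$ of $T$ by the $A_2$-edge $u_2u_3$ together with a self-loop $u_1u_1$ (its Case 3), so all degrees in $A_1'$ are unchanged and the self-loop marks exactly where a trail must be re-expanded into $u_1u_2,u_2u_3,u_3u_1$ when lifting back. Moreover, even this move can fail, and your sketch has no mechanism for the obstruction: if there is a second triangle $T'=\{u_1'u_2,u_2u_3,u_3u_1'\}\in\mathcal{T}$ with $u_1'u_2,u_3u_1'\in A_1$, then adding $u_2u_3$ to $A_1$ would put $T'$ inside $A_1'$, and a different surgery is required (the paper's Case 4: delete $u_3u_1$ and $u_1'u_2$, add $u_2u_3$ and a new edge $u_1u_1'$), together with Claim~\ref{clm:TT'} to control which triangles of $\mathcal{T}(T\cup T')$ could reappear after the lift. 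Your proposal neither identifies this dichotomy, nor specifies the smaller list $\mathcal{T}'$ for which the modified pair is $\mathcal{T}'$-free, nor argues $\mathcal{T}$-freeness of the lifted trails in pattern (ii). As written, the proposal establishes the easier parts of the induction step and leaves its essential half open.
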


\begin{proof}
We prove Theorem~\ref{thm:T-part} by induction on $|\mathcal{T}|$.
If $|\mathcal{T}|=0$, then Lemma~\ref{lem:base} immediately shows the existence of $\mathcal{P}$ satisfying the conditions in Theorem~\ref{thm:T-part}. 
This is the base case of the induction.
Suppose that $|\mathcal{T}|>0$ holds. 
The induction step is divided into the following cases.

\paragraph{Case 1.}
    If there exists a triangle $T\in \mathcal{T}$ not contained in $A_1\cup A_2$, then we define a subset $\mathcal{T'}$ of triangles in $G$ as $\mathcal{T'}:=\mathcal{T}\setminus \{T\}$. 
    Since $A_1$ and $A_2$ are both $\mathcal{T'}$-free 2-matchings and $\mathcal{T'}$ satisfies $|\mathcal{T'}|<|\mathcal{T}|$, by the induction hypothesis, one can construct a partition $\mathcal{P}$ of $A_1\bigtriangleup A_2$ into alternating trails w.r.t.\ $(A_1, A_2)$ such that $A_i\bigtriangleup P$ is a $\mathcal{T'}$-free 2-matching for $i=1, 2$ and for any $P\in \mathcal{P}$. 
    Since $A_i\bigtriangleup P$ does not contain $T$, $\mathcal{P}$ satisfies the conditions in Theorem~\ref{thm:T-part}. 

\paragraph{Case 2.}
    Suppose that there exists a triangle $T=\{u_1 u_2, u_2 u_3, u_3 u_1\} \in \mathcal{T}$ with $u_1 u_2\in A_1\setminus A_2, u_2 u_3\in A_1 \cap A_2, u_3 u_1\in A_2 \setminus A_1$. 
    In this case, we perform the following operation.
    First, we construct a graph $G'$ from $G$ by contracting $\{u_2, u_3\}$ into a single vertex $u_{23}$. 
    Parallel edges created by the contraction are identified and treated as a single edge. 
    We consider the following two cases separately. 

\begin{enumerate}
    \item Suppose that there exists a vertex $u'_1$ in $G$ with $u'_1 u_2\in A_2 \setminus A_1$ and $u_3 u'_1\in A_1 \setminus A_2$ (see Figure~\ref{fig:2-1} left). 
    Let $T'$ be a triangle $\{u'_1 u_2, u_2 u_3, u_3 u'_1\}$ (possibly, $T'\notin \mathcal{T}$). 
    Define a subset $\mathcal{T'}$ of triangles in $G'$ and 2-matchings $A_1'$ and $A_2'$ in $G'$ as $\mathcal{T'} :=\mathcal{T}\setminus\mathcal{T}(T\cup T')$, $A_1':=A_1\setminus\{u_2 u_3\}$, and $A_2':=A_2\setminus\{u_2 u_3\}$. 
    Since $A'_1$ and $A'_2$ are both $\mathcal{T'}$-free 2-matchings and $\mathcal{T'}$ satisfies $|\mathcal{T'}|<|\mathcal{T}|$, by the induction hypothesis, one can construct a partition $\mathcal{P'}$ of $A'_1\bigtriangleup A'_2$ into alternating trails w.r.t.\ $(A'_1, A'_2)$ such that $A'_i\bigtriangleup P$ is a $\mathcal{T'}$-free 2-matching for $i=1, 2$ and for any $P\in \mathcal{P'}$.
    Then, $P_1:=u_1 u_2,  u_2 u'_1, u'_1 u_3, u_3 u_1$ is an alternating trail w.r.t.\ $(A_1, A_2)$, and $\mathcal{P}:=\mathcal{P'}\cup\{P_1\}$ is a partition of $A_1\bigtriangleup A_2$ into alternating trails w.r.t.\ $(A_1, A_2)$ (see Figure~\ref{fig:2-1} right).

    To see that $A_i\bigtriangleup P$ is a 2-matching for $P\in \mathcal{P}$, we only need to consider the degree of the endpoints of $P$. 
    Since $P_1$ is a closed alternating trail, $A_i\bigtriangleup P_1$ is clearly a 2-matching. In addition, each $P\in \mathcal{P}\setminus \{P_1\}$ satisfies $d_{A_i\bigtriangleup P}(v)=d_{A'_i\bigtriangleup P}(v)\le 2$ for an endpoint $v$ of $P$, where we note that $v\notin \{u_2, u_3\}$.
    Therefore $A_i\bigtriangleup P$ is a 2-matching for any $P\in \mathcal{P}$.
    Let us show that $A_i\bigtriangleup P$ is $\mathcal{T}$-free. 
    For any $P\in \mathcal{P}\setminus \{P_1\}$, since $A_i\bigtriangleup P \subseteq (A'_i\bigtriangleup P)\cup T\cup T'$ and $A'_i\bigtriangleup P$ is $\mathcal{T'}$-free with $\mathcal{T'} :=\mathcal{T}\setminus\mathcal{T}(T\cup T')$, $A_i\bigtriangleup P$ is $\mathcal{T'}$-free. 
    Similarly, since $A_i\bigtriangleup P_1 \subseteq A'_i\cup T\cup T'$ holds, $A_i\bigtriangleup P_1$ is $\mathcal{T'}$-free. 
    In addition, for any $P\in \mathcal{P}$, since $T\cap(A_i\bigtriangleup P)$ is equal to $\{u_1 u_2, u_2 u_3\}$ or $\{u_2 u_3, u_3 u_1\}$, $A_i\bigtriangleup P$ is $\mathcal{T}(T)$-free by Observation~\ref{obs:T(T)}. 
    Similarly, $A_i\bigtriangleup P$ is $\mathcal{T}(T')$-free as $|T'\cap(A_i\bigtriangleup P)|=2$. 
    Therefore $\mathcal{P}$ satisfies the conditions in Theorem~\ref{thm:T-part}.

    \begin{figure}
    \centering
    \includegraphics[width=155mm]{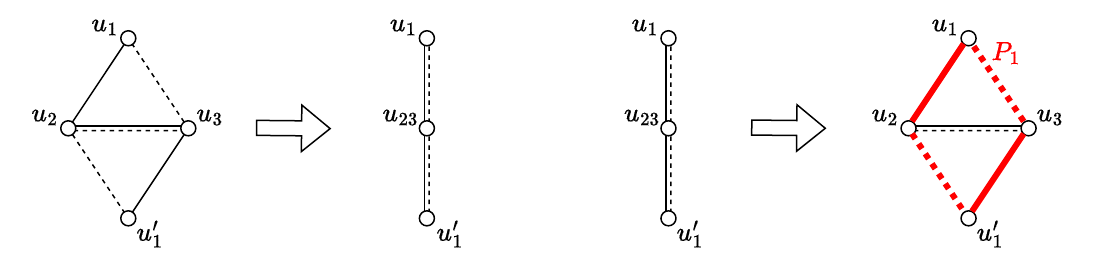}
    \caption{Reduction for Case 2-(1). Solid and dashed lines represent edges belonging to $A_1$ and $A_2$, respectively.}
    \label{fig:2-1}
    \end{figure}
    
    \begin{figure}
    \centering
    \includegraphics[width=155mm]{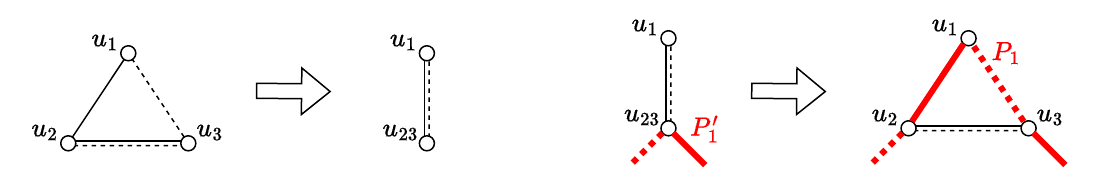}
    \caption{Reduction for Case 2-(2)}
    \label{fig:2-2}
    \end{figure}

    \item Suppose that there does not exist a vertex $u'_1$ in $G$ with $u'_1 u_2\in A_2 \setminus A_1$ and $u_3 u'_1\in A_1 \setminus A_2$ (see Figure~\ref{fig:2-2} left). 
    Define a subset $\mathcal{T'}$ of triangles in $G'$ and 2-matchings $A_1'$ and $A_2'$ in $G'$ as $\mathcal{T'} :=\mathcal{T}\setminus\mathcal{T}(T)$, $A_1':=A_1\setminus\{u_2 u_3\}$, and $A_2':=A_2\setminus\{u_2 u_3\}$. 
    Since $A'_1$ and $A'_2$ are both $\mathcal{T'}$-free 2-matchings and $\mathcal{T'}$ satisfies $|\mathcal{T'}|<|\mathcal{T}|$, by the induction hypothesis, one can construct a partition $\mathcal{P'}$ of $A'_1\bigtriangleup A'_2$ into alternating trails w.r.t.\ $(A'_1, A'_2)$ such that $A'_i\bigtriangleup P$ is a $\mathcal{T'}$-free 2-matching for $i=1, 2$ and for any $P\in \mathcal{P'}$.
    Note that 
    if two distinct trails $Q_1, Q_2 \in \mathcal{P'}$ have $u_{23}$ as an endpoint, then 
    one of $d_{A'_1\bigtriangleup Q_1}(u_{23})$ and $d_{A'_1\bigtriangleup Q_2}(u_{23})$ exceeds $2$, which contradicts the choice of $\mathcal{P'}$.     
    %
    Therefore, in $\mathcal{P'}$, there is at most one trail through vertex $u_{23}$, say $P'_1$. 
    If there is no such a trail, let $P'_1=\emptyset$. 
    Let $P_1$ be the trail in $G$ obtained by inserting $u_2 u_1, u_1 u_3$ in the sequence of edges in $G$ corresponding to $P'_1$. 
    Then, $P_1$ is an alternating trail w.r.t.\ $(A_1, A_2)$, and $\mathcal{P}:=(\mathcal{P'}\setminus\{P'_1\})\cup\{P_1\}$ is a partition of $A_1\bigtriangleup A_2$ into alternating trails w.r.t.\ $(A_1, A_2)$ (see Figure~\ref{fig:2-2} right).

    For any $v\in V\setminus\{u_2, u_3\}$, $d_{A_i\bigtriangleup P}(v)=d_{A'_i\bigtriangleup P}(v)\le 2$ holds for $P\in\mathcal{P}\setminus \{P_1\}$ and $d_{A_i\bigtriangleup P_1}(v)=d_{A'_i\bigtriangleup P'_1}(v)\le 2$ holds for $P_1$.
    We also see that if $v\in V$ is not an endpoint of $P\in \mathcal{P}$, then $d_{A_i\bigtriangleup P}(v)=d_{A_i}(v)\le 2$ holds.
    Suppose that $v\in \{u_2, u_3\}$ is an endpoint of $P\in \mathcal{P}$, which implies that $P=P_1$. 
    By symmetry, we only consider the case when $v=u_2$. 
    Since $P'_1$ has $u_{23}$ as an endpoint, the edges in $A_1\cup A_2$ incident to $u_2$ are only $u_1u_2$ and $u_2u_3$. 
    Thus $d_{A_i\bigtriangleup P_1}(u_2)\le 2$. 
    Therefore, $A_i\bigtriangleup P$ is a 2-matching for any $P\in \mathcal{P}$. 
    Let us show that $A_i\bigtriangleup P$ is $\mathcal{T}$-free. 
    For any $P\in \mathcal{P}\setminus \{P_1\}$, since $A_i\bigtriangleup P \subseteq (A'_i\bigtriangleup P)\cup T$ and $A'_i\bigtriangleup P$ is $\mathcal{T'}$-free, it holds that $A_i\bigtriangleup P$ is $\mathcal{T'}$-free. 
    Similarly, since $A_i\bigtriangleup P_1 \subseteq (A'_i\bigtriangleup P'_1)\cup T$ holds, $A_i\bigtriangleup P_1$ is $\mathcal{T'}$-free. 
    In addition, for any $P\in \mathcal{P}$, $T\cap(A_i\bigtriangleup P)$ is equal to $\{u_1 u_2, u_2 u_3\}$ or $\{u_2 u_3, u_3 u_1\}$, and hence $A_i\bigtriangleup P$ is $\mathcal{T}(T)$-free by Observation~\ref{obs:T(T)}. 
    Therefore, $\mathcal{P}$ satisfies the conditions in Theorem~\ref{thm:T-part}.

\end{enumerate}

\paragraph{Case 3.}
    Suppose that there exists a triangle $T=\{u_1 u_2, u_2 u_3, u_3 u_1\} \in \mathcal{T}$ with $\{u_1 u_2, u_1 u_3\} \subseteq A_1 \setminus A_2, u_2 u_3 \in A_2 \setminus A_1$ such that there does not exist $u'_1\in V\setminus \{u_1\}$ with $\{u'_1 u_2, u_2 u_3, u_3 u'_1\} \in \mathcal{T}$ and $\{u'_1 u_2, u_3 u'_1\} \subseteq A_1$. 
    In this case, we perform the following operation.

    Define a subset $\mathcal{T'}$ of triangles in $G$ and 2-matchings $A_1'$ and $A_2'$ in $G$ as $\mathcal{T'}:=\mathcal{T}\setminus\mathcal{T}(\{u_1 u_2, u_3 u_1\})$, $A_1':=(A_1\setminus\{u_1 u_2, u_3 u_1\})\cup\{u_2 u_3, u_1 u_1\}$, and $A'_2:=A_2$, where $u_1 u_1$ denotes a self-loop incident to $u_1$ (see Figure~\ref{fig:3-1} left and Figure~\ref{fig:3-2} left). 
    If $G$ has no self-loop incident to $u_1$, then a new self-loop is added at $u_1$ in $G$. 
    Since $A_1$ is $\mathcal{T}$-free, $A_1\cup\{u_2 u_3\}$ contains no triangle in $\mathcal{T}\setminus\mathcal{T}(\{u_2 u_3\})$, and hence $A'_1$ is $\mathcal{T}\setminus\mathcal{T}(\{u_2 u_3\})$-free. 
    This shows that $A_1'$ is $\mathcal{T'}$-free because $A'_1$ has no triangle in $\mathcal{T}(\{u_2, u_3\})$ by the assumptions of Case 3. 
    Since $A'_1$ and $A'_2$ are both $\mathcal{T'}$-free 2-matchings, by the induction hypothesis, one can construct a partition $\mathcal{P'}$ of $A'_1\bigtriangleup A'_2$ into alternating trails w.r.t.\ $(A'_1, A'_2)$ such that $A'_i\bigtriangleup P$ is a $\mathcal{T'}$-free 2-matching for $i=1, 2$ and for any $P\in \mathcal{P'}$.

\begin{enumerate}
    \item If $u_1 u_1$ is contained in $A'_2$, then $P_1:=u_1 u_1, u_1 u_2, u_2 u_3, u_3 u_1$ is an alternating trail w.r.t.\ $(A_1, A_2)$, and $\mathcal{P}:=\mathcal{P'}\cup\{P_1\}$ is a partition of $A_1\bigtriangleup A_2$ into alternating trails w.r.t.\ $(A_1, A_2)$ (see Figure~\ref{fig:3-1} right). 

    \begin{figure}
    \centering
    \includegraphics[width=155mm]{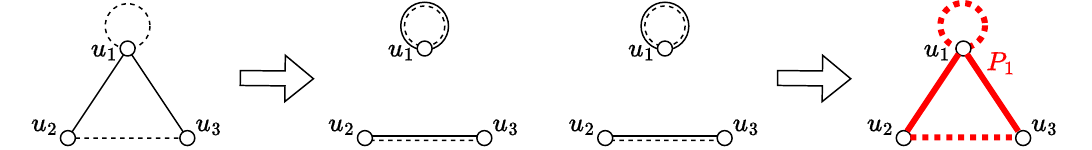}
    \caption{Reduction for Case 3-(1)}
    \label{fig:3-1}
    \end{figure}
    
    Since $P_1$ is a closed alternating trail, clearly $A_i\bigtriangleup P_1$ is a 2-matching. 
    For any $P\in \mathcal{P}\setminus \{P_1\}$ and for any $v\in V$, it holds that $d_{A_i\bigtriangleup P}(v)=d_{A'_i\bigtriangleup P}(v)\le 2$. 
    Thus, $A_i\bigtriangleup P$ is a 2-matching for any $P\in \mathcal{P}$. 
    We will show later together with case (2) that $A_i\bigtriangleup P$ is $\mathcal{T}$-free for $P\in \mathcal{P}$.

    \item If $u_1 u_1$ is not contained in $A'_2$, then there is a trail $P'_1\in \mathcal{P'}$ containing $u_1u_1$. 
    Let $P_1$ be the trail obtained from $P'_1$ by deleting $u_1 u_1$ and inserting $u_1 u_2, u_2 u_3, u_3 u_1$. 
    Then, $P_1$ is an alternating trail w.r.t.\ $(A_1, A_2)$, and $\mathcal{P}:=(\mathcal{P'}\setminus\{P'_1\})\cup\{P_1\}$ is a partition of $A_1\bigtriangleup A_2$ into alternating trails w.r.t.\ $(A_1, A_2)$ (see Figure~\ref{fig:3-2} right).

    \begin{figure}
    \centering
    \includegraphics[width=155mm]{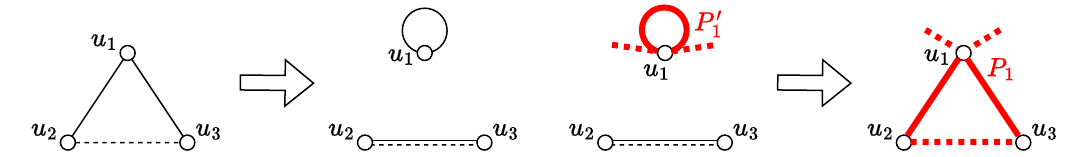}
    \caption{Reduction for Case 3-(2)}
    \label{fig:3-2}
    \end{figure}

    For any $P\in \mathcal{P}\setminus \{P_1\}$ and for any $v\in V$, it holds that $d_{A_i\bigtriangleup P}(v)=d_{A'_i\bigtriangleup P}(v)\le 2$, and hence $A_i\bigtriangleup P$ is a 2-matching. 
    Since $d_{A_i\bigtriangleup P_1}(v)=d_{A'_i\bigtriangleup P'_1}(v)\le 2$ for any $v\in V$, $A_i\bigtriangleup P_1$ is also a 2-matching.

\end{enumerate}
 
    For both of the above cases, let us show that $A_i\bigtriangleup P$ is $\mathcal{T}$-free for $i=1, 2$ and for any $P\in \mathcal{P}$. 
    For any $P\in \mathcal{P}\setminus \{P_1$\}, since $A_i\bigtriangleup P \subseteq (A'_i\bigtriangleup P)\cup\{u_1 u_2, u_3 u_1\}$ and $A'_i\bigtriangleup P$ is $\mathcal{T'}$-free, $A_i\bigtriangleup P$ is $\mathcal{T'}$-free. 
    Similarly, since $A_i\bigtriangleup P_1 \subseteq A'_i\cup \{u_1 u_2, u_3 u_1\}$ holds in case (1) and $A_i\bigtriangleup P_1 \subseteq (A'_i\bigtriangleup P'_1)\cup \{u_1u_2, u_2u_3\}$ holds in case (2), $A_i \bigtriangleup P_1$ is $\mathcal{T'}$-free. 
    In addition, for any $P\in \mathcal{P}$, $T\cap(A_i\bigtriangleup P)$ is equal to $\{u_1 u_2, u_3 u_1\}$ or $\{u_2 u_3\}$. 
    When $T\cap(A_i\bigtriangleup P)=\{u_1 u_2, u_3 u_1\}$, $A_i\bigtriangleup P$ is $\mathcal{T}(\{u_1 u_2, u_3 u_1\})$-free by Observation~\ref{obs:T(T)}. 
    When $T\cap(A_i\bigtriangleup P)=\{u_2 u_3\}$, $A_i\bigtriangleup P$ is $\mathcal{T}(\{u_1 u_2, u_3 u_1\})$-free because it contains neither $u_1 u_2$ nor $u_3 u_1$. 
    Therefore $\mathcal{P}$ satisfies the conditions in Theorem~\ref{thm:T-part}.
    
\paragraph{Case 3'.}
    Suppose that there exists a triangle $T=\{u_1 u_2, u_2 u_3, u_3 u_1\} \in \mathcal{T}$ with $ u_2 u_3 \in A_1 \setminus A_2$ and $\{u_1 u_2, u_3 u_1\} \subseteq A_2 \setminus A_1$ such that there does not exist $u'_1\in V\setminus \{u_1\}$ with $\{u'_1 u_2, u_2 u_3, u_3 u'_1\} \in \mathcal{T}$ and $\{u'_1 u_2, u_3 u'_1\} \subseteq A_2$. 
    Then, partition $\mathcal{P}$ can be constructed by the same argument as in Case 3, where $A_1$ and $A_2$ are swapped. 

\paragraph{Case 4.}
    Suppose that there is no triangle in $\mathcal{T}$ that satisfies any condition in Case 1, 2, 3, or 3'. 
    Since $\mathcal{T}$ is not an empty set,  by changing the roles of $A_1$ and $A_2$ if necessary, there exist two triangles $T=\{u_1 u_2, u_2 u_3, u_3 u_1\} \in \mathcal{T}$ and $T'=\{u'_1 u_2, u_2 u_3, u_3 u'_1\} \in \mathcal{T}$ with $\{u_1 u_2, u_3 u_1, u'_1 u_2, u_3 u'_1\} \subseteq A_1 \setminus A_2$ and $u_2 u_3 \in A_2 \setminus A_1$. 
    Note that $u'_1 u_2, u_3 u'_1 \notin A_2$, because $T'$ does not satisfy the condition in Case 2. 
    In this case, define a subset $\mathcal{T'}$ of triangles in $G$ and 2-matchings $A_1'$ and $A_2'$ in $G$ as $\mathcal{T'}:=\mathcal{T}\setminus\mathcal{T}(T\cup T'), A_1':=(A_1\setminus\{u_3 u_1, u'_1 u_2\})\cup\{u_2 u_3, u_1 u'_1\}$, and $A'_2:=A_2$ (see Figure~\ref{fig:4-1} left and Figure~\ref{fig:4-2} left). 
    If $u_1 u'_1$ does not exist in $G$, then add a new edge $u_1 u'_1$ to $G$. 
    Since $A'_1$ and $A'_2$ are both $\mathcal{T'}$-free 2-matchings and $|\mathcal{T'}| < |\mathcal{T}|$, one can construct a partition $\mathcal{P'}$ of $A'_1\bigtriangleup A'_2$ into alternating trails w.r.t.\ $(A'_1, A'_2)$ such that $A'_i\bigtriangleup P$ is a $\mathcal{T'}$-free 2-matching for $i=1, 2$ and for any $P\in \mathcal{P'}$. 
    For later use, we show the following claim. 

\begin{claim}\label{clm:TT'}
    For any $T''\in \mathcal{T}(T\cup T')$, the vertex set of $T''$ is contained in $\{u_1, u_2, u_3, u'_1\}$.
\end{claim}

\begin{proof}[Proof of the Claim]
    Assume to the contrary that there exists $T''\in \mathcal{T}(T\cup T')$ such that the vertex set of $T''$ contains a vertex $v \notin\{u_1, u_2, u_3, u'_1\}$. 
    Note that $T''$ is contained in $A_1\cup A_2$ because otherwise $T''$ satisfies the condition in Case 1.
    Then $T''$ is denoted by $T''=\{xy, xv, yv\}$, where $xy$ is one of $u_1 u_2, u_3 u_1, u'_1 u_2, u_3 u'_1$, and $u_2 u_3$. 
    If $xy = u_1 u_2$, then $u_1 v$ and $u_2 v$ are in $A_2\setminus A_1$ due to the degree constraint of $A_1$. Therefore, $T''$ satisfies the condition in Case 3', which contradicts the assumption in Case 4. 
    The same argument applies when $xy = u_3 u_1, u'_1 u_2$, or $u_3 u'_1$. If $xy = u_2 u_3$, then $u_2 v$ and $u_3 v$ are in $A_2\setminus A_1$ due to the degree constraint of $A_1$. 
    Then, $T''$ is contained in $A_2$, which contradicts that $A_2$ is $\mathcal{T}$-free. 
\end{proof}

    Then, the case is divided according to whether $A_2$ contains an edge $u_1 u'_1$ or not.

\begin{enumerate}
    \item Suppose that $A_2$ contains $u_1u'_1$. 
    Then, $P_1:=u_1 u_3, u_3 u_2, u_2 u'_1, u'_1 u_1$ is an alternating trail w.r.t.\ $(A_1, A_2)$, and $\mathcal{P}:=\mathcal{P'}\cup\{P_1\}$ is a partition of $A_1\bigtriangleup A_2$ into alternating trails w.r.t.\ $(A_1, A_2)$ (see Figure~\ref{fig:4-1} right).

    \begin{figure}
    \centering
    \includegraphics[width=155mm]{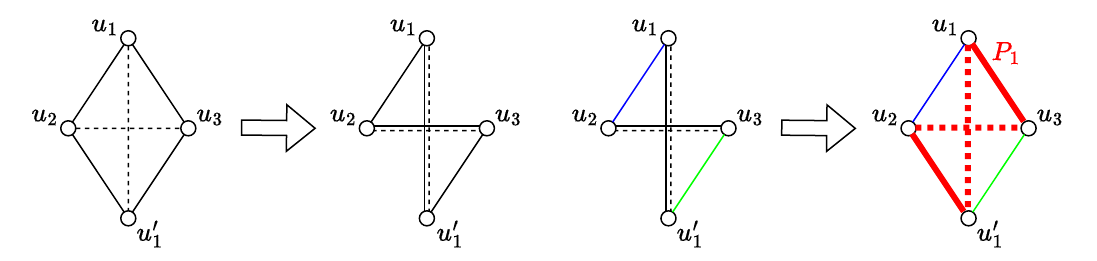}
    \caption{Reduction for Case 4-(1)}
    \label{fig:4-1}
    \end{figure}

    It is easy to see that $A_i\bigtriangleup P$ is a $\mathcal{T'}$-free 2-matching for $i=1, 2$ and for any $P\in \mathcal{P}$.
    In addition, since $\{u_1 u'_1, u'_1 u_2, u_2 u_3, u_3 u_1\}$ is contained in a single trail $P_1$, for $i=1, 2$ and for any $P\in\mathcal{P}$, $A_i\bigtriangleup P$ satisfies $(A_i\bigtriangleup P) \cap \{u_3 u_1, u'_1 u_2\}=\emptyset$ or $(A_i\bigtriangleup P) \cap \{u_2 u_3, u_1 u'_1\} =\emptyset$. 
    Therefore $A_i\bigtriangleup P$ contains none of $T$, $T'$, $\{u_1 u'_1, u'_1 u_2, u_2 u_1\}$, and $\{u_1 u'_1, u'_1 u_3, u_3 u_1\}$, and so $A_i\bigtriangleup P$ is $\mathcal{T}(T\cup T')$-free by Claim~\ref{clm:TT'}.
    Thus $\mathcal{P}$ satisfies the conditions in Theorem~\ref{thm:T-part}.
    
    \item Suppose that $A_2$ does not contain $u_1 u'_1$. 
    Then, in $\mathcal{P'}$, there is a trail $P'_1$ containing $u_1 u'_1$. 
    Let $P_1$ be the trail obtained from $P'_1$ by deleting $u_1 u'_1$ and inserting $u_1 u_3, u_3 u_2, u_2 u'_1$. 
    Then $P_1$ is an alternating trail w.r.t.\ $(A_1, A_2)$, and $\mathcal{P}:=(\mathcal{P'}\setminus\{P'_1\})\cup\{P_1\}$ is a partition of $A_1\bigtriangleup A_2$ into alternating trails w.r.t.\ $(A_1, A_2)$ (see Figure~\ref{fig:4-2} right).

    \begin{figure}
    \centering
    \includegraphics[width=155mm]{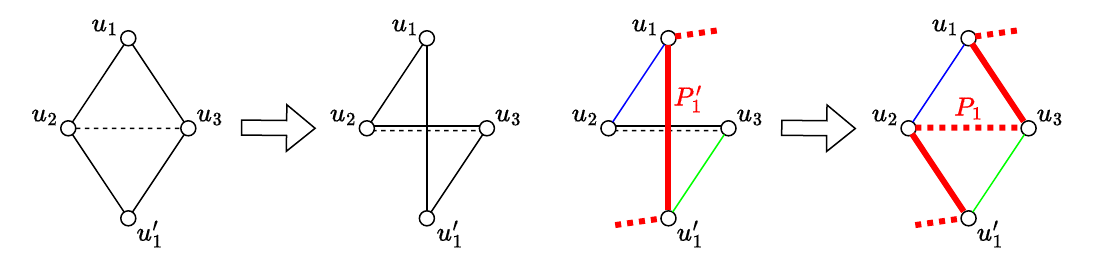}
    \caption{Reduction for Case 4-(2)}
    \label{fig:4-2}
    \end{figure}

    It is easy to see that $A_i\bigtriangleup P$ is a $\mathcal{T'}$-free 2-matching for $i=1, 2$ and for any $P\in \mathcal{P}$.
    In addition, since $\{u_1 u_3, u_3 u_2, u_2 u'_1\}$ is contained in a single trail $P_1$, for $i=1, 2$ and for any $P\in\mathcal{P}$, $A_i\bigtriangleup P$ satisfies $ (A_i\bigtriangleup P) \cap \{u_3 u_1, u'_1 u_2\} =\emptyset$ or $u_2 u_3\notin A_i\bigtriangleup P$. 
    Therefore $A_i\bigtriangleup P$ contains neither $T$ nor $T'$, and so $A_i\bigtriangleup P$ is $\mathcal{T}(T\cup T')$-free by Claim~\ref{clm:TT'}.
    Thus $\mathcal{P}$ satisfies the conditions in Theorem~\ref{thm:T-part}.

\end{enumerate}

Therefore, by the induction, there exists a partition $\mathcal{P}$ of $A_1\bigtriangleup A_2$ into alternating trails w.r.t.\ $(A_1, A_2)$ such that $A_i\bigtriangleup P$ is a $\mathcal{T}$-free 2-matching for $i=1, 2$ and for any $P\in \mathcal{P}$. 
\end{proof}

\vskip\baselineskip
Theorem~\ref{thm:tri-part} follows immediately from Theorem~\ref{thm:T-part} by setting $\mathcal{T}$ as the set of all triangles in $G$.

\section{Concluding Remarks}
In this paper, we have proved a decomposition theorem for triangle-free 2-matchings (Theorem~\ref{thm:tri-part}).
In fact, we have proved a stronger theorem (Theorem~\ref{thm:T-part}), which leads to a similar validity proof for a PTAS for a generalized problem.

\begin{corollary}
    There is a PTAS for the problem of finding a maximum cardinality $\mathcal{T}$-free 2-matching for a given graph $G$ and a triangle set $\mathcal{T}$ in $G$.
\end{corollary}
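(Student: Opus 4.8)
The plan is to use the obvious generalization of Algorithm~\ref{alg:unweghted}: start from $APX=\emptyset$ and, as long as there is a trail $P$ with $|P|\le 2/\varepsilon$ for which $APX\bigtriangleup P$ is a $\mathcal{T}$-free 2-matching strictly larger than $APX$, replace $APX$ by $APX\bigtriangleup P$; return $APX$ when no such trail exists. The correctness analysis then mirrors the proof of Corollary~\ref{cor:PTAS} line by line, with Theorem~\ref{thm:T-part} used in place of Theorem~\ref{thm:tri-part} so that the list $\mathcal{T}$ is respected throughout.

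Concretely, I would fix an optimal $\mathcal{T}$-free 2-matching $OPT$ and suppose the current solution $APX$ is a $\mathcal{T}$-free 2-matching with $|APX|<(1-\varepsilon)|OPT|$. Applying Theorem~\ref{thm:T-part} with $A_1=OPT$ and $A_2=APX$ yields a partition $\mathcal{P}$ of $OPT\bigtriangleup APX$ into alternating trails w.r.t.\ $(OPT, APX)$ such that both $APX\bigtriangleup P$ and $OPT\bigtriangleup P$ are $\mathcal{T}$-free 2-matchings for every $P\in\mathcal{P}$. Since $|APX\bigtriangleup P|-|APX| = |OPT|-|OPT\bigtriangleup P|\ge 0$ for each $P$, while these differences sum to $|OPT|-|APX|$, at least $|OPT|-|APX|$ members of $\mathcal{P}$ satisfy $|APX\bigtriangleup P|=|APX|+1$, and each such $P$ is an augmenting trail in the sense required by the algorithm. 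Averaging the lengths over $\mathcal{P}$, the shortest augmenting trail has length at most $|OPT\bigtriangleup APX|/(|OPT|-|APX|)\le 2|OPT|/(\varepsilon|OPT|)=2/\varepsilon$, so the search cannot get stuck before reaching $|APX|\ge(1-\varepsilon)|OPT|$.

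For the time bound I would observe that each successful iteration increases $|APX|$ by one, so there are at most $|V|$ iterations, and in each iteration one enumerates the $|V|^{O(1/\varepsilon)}$ trails of length at most $2/\varepsilon$ and, for each, tests in time polynomial in $|V|+|\mathcal{T}|$ whether $APX\bigtriangleup P$ is a 2-matching containing no triangle of $\mathcal{T}$; hence the whole algorithm is polynomial for any fixed $\varepsilon$. I do not anticipate a genuine obstacle: the only departure from the triangle-free case is that $\mathcal{T}$-freeness is checked against the explicitly given list $\mathcal{T}$ rather than against all triangles of $G$, which remains a polynomial-time test, and the one point requiring care is simply to invoke Theorem~\ref{thm:T-part} (not Theorem~\ref{thm:tri-part}) so that the decomposition stays $\mathcal{T}$-free.
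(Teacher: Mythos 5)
Your proposal is correct and follows exactly the route the paper intends: the paper proves this corollary by running the same local-search algorithm with the $\mathcal{T}$-freeness test and repeating the argument of Corollary~\ref{cor:PTAS} verbatim, with Theorem~\ref{thm:T-part} in place of Theorem~\ref{thm:tri-part}. Your counting of augmenting trails, the averaging bound of $2/\varepsilon$, and the polynomial running-time observation match that argument.
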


A possible direction of future research is to extend our result to the edge weighted variant of {\sc Triangle-Free 2-Matching}.
We believe that a more refined decomposition theorem will show the existence of a PTAS also for the weighted version of the problem.

\bibliographystyle{abbrv}
\bibliography{ref}

\begin{thebibliography}{10}

\bibitem{BGA}
M.~Bosch-Calvo, F.~Grandoni, and A.~J. Ameli.
\newblock A {PTAS} for triangle-free 2-matching.
\newblock {\em arXiv preprint arXiv:2311.11869}, 2023.

\bibitem{CSS}
J.~Cheriyan, A.~Seb\H{o}, and Z.~Szigeti.
\newblock Improving on the 1.5-approximation of a smallest 2-edge connected spanning subgraph.
\newblock {\em SIAM Journal on Discrete Mathematics}, 14(2):170--180, 2001.

\bibitem{CP}
G.~Cornuéjols and W.~Pulleyblank.
\newblock A matching problem with side conditions.
\newblock {\em Discrete Mathematics}, 29(2):135--159, 1980.

\bibitem{CL}
A.~Czumaj and A.~Lingas.
\newblock On approximability of the minimum-cost $k$-connected spanning subgraph problem.
\newblock In {\em Proceedings of the 10th Annual ACM-SIAM Symposium on Discrete Algorithms (SODA 1999)}, pages 281--290, 1999.

\bibitem{CristG}
C.~G. Fernandes.
\newblock A better approximation ratio for the minimum size $k$-edge-connected spanning subgraph problem.
\newblock {\em Journal of Algorithms}, 28(1):105--124, 1998.

\bibitem{GGA}
M.~Garg, F.~Grandoni, and A.~J. Ameli.
\newblock Improved approximation for two-edge-connectivity.
\newblock In {\em Proceedings of the 34th Annual ACM-SIAM Symposium on Discrete Algorithms (SODA 2023)}, pages 2368--2410, 2023.

\bibitem{HartD}
D.~Hartvigsen.
\newblock {\em Extensions of Matching Theory}.
\newblock PhD thesis, Carnegie Mellon University, 1984.
\newblock Available at https://david-hartvigsen.net.

\bibitem{R-HartD}
D.~Hartvigsen.
\newblock Finding triangle-free 2-factors in general graphs.
\newblock {\em Journal of Graph Theory}, 106(3):581--662, 2024.

\bibitem{HL}
D.~Hartvigsen and Y.~Li.
\newblock Triangle-free simple 2-matchings in subcubic graphs (extended abstract).
\newblock In {\em Proceedings of the 12th International Conference on Integer Programming and Combinatorial Optimization}, IPCO '07, page 43–52, Berlin, Heidelberg, 2007. Springer-Verlag.

\bibitem{HVV}
C.~Hunkenschr\"{o}der, S.~Vempala, and A.~Vetta.
\newblock A 4/3-approximation algorithm for the minimum 2-edge connected subgraph problem.
\newblock {\em ACM Transactions on Algorithms}, 15(4):1--28, 2019.

\bibitem{KV}
S.~Khuller and U.~Vishkin.
\newblock Biconnectivity approximations and graph carvings.
\newblock {\em Journal of the ACM}, 41(2):214--235, 1994.

\bibitem{Kira}
Z.~Kir{\'a}ly.
\newblock Restricted t-matchings in bipartite graphs.
\newblock Technical Report QP-2009-04, Egerv{\'a}ry Research Group, Budapest, 2009.
\newblock {\tt egres.elte.hu}.

\bibitem{K2010}
Y.~Kobayashi.
\newblock A simple algorithm for finding a maximum triangle-free 2-matching in subcubic graphs.
\newblock {\em Discrete Optimization}, 7(4):197--202, 2010.

\bibitem{K2022}
Y.~Kobayashi.
\newblock Weighted triangle-free 2-matching problem with edge-disjoint forbidden triangles.
\newblock {\em Math. Program.}, 192(1–2):675–702, 2022.

\bibitem{KN}
Y.~Kobayashi and T.~Noguchi.
\newblock {An approximation algorithm for two-edge-connected subgraph problem via triangle-free two-edge-cover}.
\newblock In {\em 34th International Symposium on Algorithms and Computation (ISAAC 2023)}, volume 283 of {\em Leibniz International Proceedings in Informatics (LIPIcs)}, pages 49:1--49:10, 2023.

\bibitem{PalK}
K.~Paluch.
\newblock Triangle-free $2 $-matchings.
\newblock {\em arXiv preprint arXiv:2311.13590}, 2023.

\bibitem{PW}
K.~Paluch and M.~Wasylkiewicz.
\newblock A simple combinatorial algorithm for restricted 2-matchings in subcubic graphs - via half-edges.
\newblock {\em Information Processing Letters}, 171:106146, 2021.

\bibitem{SV}
A.~Seb\H{o} and J.~Vygen.
\newblock Shorter tours by nicer ears: 7/5-approximation for the graph-tsp, 3/2 for the path version, and 4/3 for two-edge-connected subgraphs.
\newblock {\em Combinatorica}, 34(5):597--629, 2014.

\bibitem{VV}
S.~Vempala and A.~Vetta.
\newblock Factor 4/3 approximations for minimum 2-connected subgraphs.
\newblock In {\em Proceedings of the Third International Workshop on Approximation Algorithms for Combinatorial Optimization (APPROX 2000)}, pages 262--273, 2000.

\end{thebibliography}
\end{document}